\newcolumntype{C}[1]{>{\centering\arraybackslash}m{#1}}
\newcommand{\Tr}{\operatorname{Tr}}
\definecolor{revised}{rgb}{.9,0,.3}
\newcommand{\bla}{\color{black}}
\newcommand{\be}{\begin{equation}}
\newcommand{\ee}{\end{equation}}
\newcommand{\ba}{\begin{eqnarray}}
\newcommand{\ea}{\end{eqnarray}}
\newcommand{\ketbra}[2]{|#1\rangle \langle #2|}
\newcommand{\tr}{\operatorname{Tr}}
\def\>{\rangle}
\def\<{\langle}
\newtheorem{prop}{Proposition}
\begin{document}
\title{Supplemental Material for ``Quantum Advantage: A Single Qubit's Experimental Edge in Classical Data Storage''}
\maketitle
 
\section{Theoretical Analysis for the quantum and classical strategies playing 3-restaurant games}
\subsection{Elementary communication scenario}
In the elementary communication scenario involving two parties -- Alice (the sender) and Bob (the receiver) -- Alice receives a classical random variable \( x \in \mathcal{X} \), and Bob's objective is to produce another random variable \( y \in \mathcal{Y} \). This typically yields an \(|\mathcal{X}| \times |\mathcal{Y}|\) stochastic matrix, also known as a channel matrix, with elements \(\{p(y|x)\}\), where \( p(y|x) \geq 0 \) for all \( x, y \) and \( \sum_{y} p(y|x) = 1 \) for each \( x \). The element \( p(y|x) \) represents the conditional probability that Bob produces \( y \) given that Alice received \( x \). Without additional resources, only trivial channel matrices can be achieved. However, when resources are available, a broader range of channel matrices can be realized. These resources are typically categorized into two types:
\begin{figure}[b!]
\centering
\includegraphics[width=0.8\textwidth]{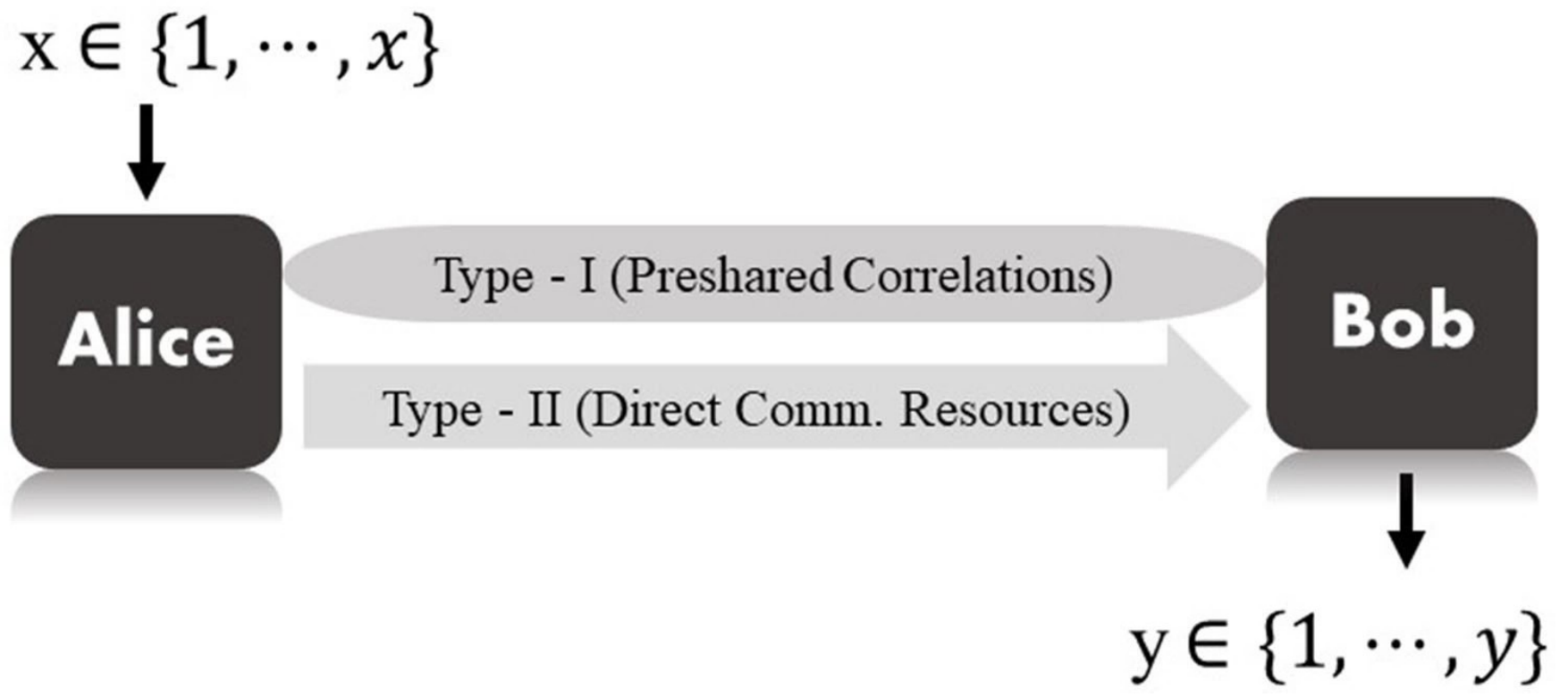}
\caption{ (Color online) Set of channel matrices $\{p(y|x)\}$ that can be simulated by Alice and Bob depends on the available resources available between them. }\label{fig0}
\end{figure}
\begin{itemize}
\item[] {\bf Type-I (Preshared Correlations):} Alice and Bob may have access to prior correlations or correlated systems before initiating the channel simulation. These preshared resources can then be utilized to achieve the desired channel matrix (see Fig.\ref{fig0}). The two main types of preshared correlations are:
\begin{itemize}
\item[C1.] Classical shared randomness, often referred to simply as shared randomness.
\item[C2.] Quantum entanglement, which can lead to intriguing `nonlocal' correlations.
\end{itemize}
\item[] {\bf Type-II (Direct Communication Resource):} Alice can encode her messages into the states of a physical system and send this system directly to Bob. Bob then performs measurements on the received system to obtain the desired outcomes, thereby achieving the target channel matrices. When comparing the communication strengths of different theories, we consider systems with an identical number of distinguishable states. Specifically, we analyze the following types of systems:
\begin{itemize}
\item[D1.] Classical systems: The state of a classical system is described by a probability vector. For a system with \( d \) distinguishable states, the state vectors are in \(\mathbb{R}^d\). A typical example is a classical bit, which has two distinguishable states.
\item[D2.] Quantum systems: The state space of a quantum system, associated with Hilbert space \(\mathcal{H}\), is represented by the set of density operators \(\mathcal{D}(\mathcal{H})\). The number of perfectly distinguishable states corresponds to the dimension of the Hilbert space \(\mathcal{H}\). The quantum analogue of the classical bit is the qubit, whose state space is \(\mathcal{D}(\mathbb{C}^2)\).
\end{itemize}
\end{itemize}
In the following table we list some important results in this elementary communication setup.

\begin{table}[h!]
    \centering
\begin{tabular}{ |p{5cm}|p{5cm}|p{7.5cm}|  }
 \hline 
 \vspace{0.1cm}\centering{\bf Type - I resource} \vspace{0.1cm} &  \vspace{0.1cm}\centering{\bf Type - II Resource}  \vspace{0.1cm}& \vspace{0.1cm} {\bf ~~~~~~~~~~~~~~~~~~~~~~~~~~~~~~~Results} \vspace{0.1cm}\\
 \hline \hline
\vspace{0.3cm}
\centering {Classical Shared Randomness}   & \vspace{0.3cm} \centering{$1$-cbit vs. $1$-qubit}  & The set of channel matrices that can be simulated with $1$-cbit is same as the set simulable with $1$-qubit when classical shared randomness is available between the sender and receiver \cite{Holevo1973, Frenkel2015}.\\
 \hline
 \vspace{0.5cm}
 \centering{Quantum Shared Randomness (Entanglement)}   & \vspace{0.5cm} \centering {$1$-cbit vs. $1$-qubit}& Seminal superdense coding protocol establishes advantage of qubit communication over its classical counterpart in the presence of preshared entanglement \cite{Bennett1992}. Furthermore, entanglement turns out to be more advantageous than classical shared randomness while assisting the cbit \cite{Frenkel2022}.\\
 \hline
\vspace{0.3cm}
 \centering {No Shared Randomness (neither quantum nor classical)}   & \vspace{0.3cm} \centering {$1$-cbit vs. $1$-qubit}  & In the absence of pre-shared correlation, the set of channel matrices achievable through qubit communication is strictly greater than that achievable with classical bit communication \cite{Patra2024}. Our present work experimentally demonstrates this feature.\\
 \hline
\end{tabular}
 \caption{Contributions of the present work in the backdrop of existing no-go results.}
    \label{tab:my_label}
\end{table}

\subsection{Generic Three-Restaurant game: $\mathbb{H}^3(\gamma_1,\gamma_2,\gamma_3)$  }
Three authors from this study, along with their collaborators, have thoroughly investigated the generic \(n\)-Restaurant game, denoted as \(\mathbb{H}^n(\gamma_1, \cdots, \gamma_n)\), in Ref.\cite{Patra2024}. Their analysis covers games solvable with one bit of classical communication as well as those involving qubit communication. For the purposes of the present discussion, we will focus on the detailed exploration of the generic three-restaurant games \(\mathbb{H}^3(\gamma_1, \gamma_2, \gamma_3)\). Alice manages three restaurants, one of which is randomly closed each day. Bob, unaware of which restaurant is closed, aims to visit one of the open restaurants. To assist Bob, Alice can communicate restricted information: either through a perfect classical bit (c-bit) channel or a perfect qubit channel. The objectives are:
\begin{itemize}
\item[(h1)] Bob must avoid visiting the closed restaurant.
\item[(h2)] Bob's probability of visiting restaurant \( i \) should be \(\gamma_i\), where \(\gamma_i \geq 0\) for all \( i \in \{1, 2, 3\}\) and \(\sum_{i=1}^3 \gamma_i = 1\).
\end{itemize}
This scenario can be represented using the `visit' matrix:
\begin{align}\label{visit}
\mathbb{V}\equiv\begin{blockarray}{cccc}
& 1_b & 2_b & 3_b \\
\begin{block}{c(ccc)}
1_c~~ & p(1_b|1_c) & p(2_b|1_c) & p(3_b|1_c) \\
2_c~~ & p(1_b|2_c) & p(2_b|2_c) & p(3_b|2_c) \\
3_c~~ & p(1_b|3_c) & p(2_b|3_c) & p(3_b|3_c) \\
\end{block}
\end{blockarray}  
\end{align}
Here, \( p(i_b|j_c) \) denotes the probability that Bob visits the \( i \)-th restaurant (indicated by the subscript \( b \)) given that the \( j \)-th restaurant is closed (indicated by the subscript \( c \)). The visit matrix \(\mathbb{V}\) is a stochastic matrix where \(\sum_{i_b=1}^3 p(i_b|j_c) = 1\) for all \( j_c \). The condition (h1) implies that:
\begin{align}
p(i_b|i_c) = 0, \quad \forall i \in \{1, 2, 3\},
\end{align}
meaning all diagonal entries of \(\mathbb{V}\) must be zero. Bob's probability of visiting the \( i \)-th restaurant is obtained by summing the entries in the \( i \)-th column of the matrix \(\mathbb{V}\). Thus, condition (h2) is expressed as:
\begin{align}\label{gamma}
p(i_b) = \sum_{j=1}^3 p(i_b|j_c) p(j_c) = \frac{1}{3} \sum_{j=1}^3 p(i_b|j_c) = \gamma_i.
\end{align}
Here, \( p(j_c) = 1/3 \) for all \( j \in \{1, 2, 3\} \), assuming uniform closing probabilities, and the final equality follows from condition (h2). To explore the allowed game space \(\mathbb{H}^3(\gamma_1, \gamma_2, \gamma_3)\) under condition (h1) without any communication constraints, we first identify the extreme points within the vector \(\vec{\gamma} = (\gamma_1, \gamma_2, \gamma_3)^T\). It is evident that the probability vector \(\vec{\gamma}\) cannot have two zero entries simultaneously; at most, one zero is possible. We determine the extreme visit matrices \(\mathbb{V}_e\) that satisfy condition \( \mathrm{h1} \), leading to extreme \(\vec{\gamma}\) vectors. The extreme visit matrices are:
\begin{align}
\left\{\!\begin{aligned}
\mathbb{V}_{e}^{1}\equiv\begin{blockarray}{ccc}
\begin{block}{(ccc)}
 0 & 1 & 0 \\
 1 & 0 & 0 \\
1 & 0 & 0 \\
\end{block}
\end{blockarray} ~~~;~~
\mathbb{V}_{e}^{2}\equiv\begin{blockarray}{ccc}
\begin{block}{(ccc)}
 0 & 1 & 0 \\
 1 & 0 & 0 \\
0 & 1 & 0 \\
\end{block}
\end{blockarray}  ~~~;~~
\mathbb{V}_{e}^{3}\equiv\begin{blockarray}{ccc}
\begin{block}{(ccc)}
 0 & 1 & 0 \\
 0 & 0 & 1 \\
0 & 1 & 0 \\
\end{block}
\end{blockarray} ~~~;~~
\mathbb{V}_{e}^{4}\equiv\begin{blockarray}{ccc}
\begin{block}{(ccc)}
 0 & 0 & 1 \\
 0 & 0 & 1 \\
0 & 1 & 0 \\
\end{block}
\end{blockarray}  ~\\
\mathbb{V}_{e}^{5}\equiv\begin{blockarray}{ccc}
\begin{block}{(ccc)}
 0 & 0 & 1 \\
 1 & 0 & 0 \\
1 & 0 & 0 \\
\end{block}
\end{blockarray} ~~~;~~
\mathbb{V}_{e}^{6}\equiv\begin{blockarray}{ccc}
\begin{block}{(ccc)}
 0 & 0 & 1 \\
 0 & 0 & 1 \\
1 & 0 & 0 \\
\end{block}
\end{blockarray}  ~~~;~~
\mathbb{V}_{e}^{7}\equiv\begin{blockarray}{ccc}
\begin{block}{(ccc)}
 0 & 0 & 1 \\
 1 & 0 & 0 \\
0 & 1 & 0 \\
\end{block}
\end{blockarray}~~~;~~
\mathbb{V}_{e}^{8}\equiv\begin{blockarray}{ccc}
\begin{block}{(ccc)}
 0 & 1 & 0 \\
 0 & 0 & 1 \\
1 & 0 & 0 \\
\end{block}
\end{blockarray} ~
\end{aligned}\right\}.
\end{align}
From these extreme visit matrices and Equation (\ref{gamma}), the six extreme probability vectors \(\vec{\gamma}_e^1\) through \(\vec{\gamma}_e^6\) are:
\begin{align}
\left\{\!\begin{aligned}
\vec{\gamma}_e^1 = \begin{blockarray}{c}
\begin{block}{(c)}
 2/3 \\
 1/3 \\
 0 \\
\end{block}
\end{blockarray} ~~;~
\vec{\gamma}_e^2 = \begin{blockarray}{c}
\begin{block}{(c)}
 1/3 \\
 2/3 \\
 0 \\
\end{block}
\end{blockarray}  ~~;~
\vec{\gamma}_e^3 = \begin{blockarray}{c}
\begin{block}{(c)}
 0 \\
 2/3 \\
 1/3 \\
\end{block}
\end{blockarray}  ~~;~
\vec{\gamma}_e^4 = \begin{blockarray}{c}
\begin{block}{(c)}
 0 \\
 1/3 \\
 2/3 \\
\end{block}
\end{blockarray} ~~;~
\vec{\gamma}_e^5 = \begin{blockarray}{c}
\begin{block}{(c)}
 2/3 \\
 0 \\
 1/3 \\
\end{block}
\end{blockarray}  ~~;~
\vec{\gamma}_e^6 = \begin{blockarray}{c}
\begin{block}{(c)}
 1/3 \\
 0 \\
 2/3 \\
\end{block}
\end{blockarray} ~ 
\end{aligned}\right\}.
\end{align}

Matrices \(\mathbb{V}_e^7\) and \(\mathbb{V}_e^8\) result in a uniform probability distribution \(\vec{\gamma} = (1/3, 1/3, 1/3)^T\). This vector is not extreme, as it can be expressed as a convex combination of the other extreme probability vectors identified. The allowable game space \(\mathbb{H}^3(\gamma_1, \gamma_2, \gamma_3)\) is therefore represented by the convex hull of the extreme probability vectors \(\{\vec{\gamma}_e^i\}_{i=1}^6\). However, in scenarios where communication is restricted or where shared randomness is not permitted, the permissible game space may be more limited than this set. In the next subsection, we examine the achievable game spaces \(\mathbb{H}^3(\gamma_1, \gamma_2, \gamma_3)\) for Alice and Bob when using a perfect 1-bit classical channel or a perfect qubit channel. It is crucial to note that in these analyses, we assume the presence of pre-shared correlations as a resource and investigate the scenarios where Alice and Bob operate without such pre-shared correlations.

\subsubsection{Classical achievable game space for three - Restaurant game}
In the context of the three-restaurant game, when Alice is restricted to communicating only 1 bit, the most general strategies can be described as follows:
\begin{itemize}
\item[--] If the \(i\)-th restaurant is closed, Alice will communicate a bit \(0\) with probability \(\alpha_i\) and a bit \(1\) with probability \(1 - \alpha_i\).
\item [--] Upon receiving the signal \(0\), Bob decides his choice by flipping a three-faced coin with probabilities \(\{r_i\}_{i=1}^3\), leading him to visit the \(i\)-th restaurant. Conversely, upon receiving the signal \(1\), Bob uses a different three-faced coin with probabilities \(\{q_i\}_{i=1}^3\), determining his choice of the \(i\)-th restaurant.
\end{itemize}
These strategies represented as a visit matrix \(\mathbb{V}\) reads as:
\begin{align}\label{visit3}
\mathbb{V}\equiv\begin{blockarray}{cccc}
& 1_b & 2_b & 3_b \\
\begin{block}{c(ccc)}
1_c~~ & \alpha_1r_1+(1-\alpha_1)q_1 & \alpha_1r_2+(1-\alpha_1)q_2 & \alpha_1r_3+(1-\alpha_1)q_3 \\
2_c~~ & \alpha_2r_1+(1-\alpha_2)q_1 & \alpha_2r_2+(1-\alpha_2)q_2 & \alpha_2r_3+(1-\alpha_2)q_3 \\
3_c~~ & \alpha_3r_1+(1-\alpha_3)q_1 & \alpha_3r_2+(1-\alpha_3)q_2 & \alpha_3r_3+(1-\alpha_3)q_3. \\
\end{block}
\end{blockarray}  
\end{align}
To satisfy condition (h1), which requires that every diagonal term of the visit matrix \(\mathbb{V}\) be zero, i.e., $p(i_b | i_c) = 0 \quad \text{for all } i \in \{1, 2, 3\}$. Hence, the equation ensuring this condition becomes:
\begin{align}
\alpha_ir_i+(1-\alpha_i)q_i=0,~~\forall~i=1,2,3.
\end{align}
Satisfying these requirements following three distinct strategies can be employed:
\begin{itemize}
\item[(i)] set $\alpha_{i}=0$ and $q_i=0$,
\item[(ii)] set $\alpha_i=1$ and $r_i=0$,
\item[(iii)] set $r_i = 0$ and $q_i =0$.
\end{itemize}

To determine the game space \( \mathbb{H}^3(\gamma_1, \gamma_2, \gamma_3) \) given the constraints on the probability distribution, we explore two main cases:\\
{\bf Case 1:} [At most one \(\gamma_i\) is Zero]    When one of the probabilities \(\gamma_i\) is zero, say \(\gamma_1 = 0\), the strategies Alice and Bob can use are described as follows:
\begin{itemize}
\item[]{\it Alice's Strategy:}
\item[--] If Restaurant \(2\) is closed, Alice communicates \(0\).
\item[--] If Restaurant \(3\) is closed, Alice communicates \(1\).
\item[--] If Restaurant \(1\) is closed, Alice communicates \(0\) with probability \(\alpha\) and \(1\) with probability \(1 - \alpha\).
\item[]{\it Bob's Strategy:}
\item[--] Upon receiving \(0\), Bob decides to visit Restaurant \(2\).
\item[--] Upon receiving \(1\), Bob decides to visit Restaurant \(3\)
\end{itemize}
This strategy ensures that Bob never visits a closed restaurant. The resulting probabilities are:
\begin{align*}
\gamma_1 = 0, \quad \gamma_2 = \frac{1 + \alpha}{3}, \quad \gamma_3 = \frac{2 - \alpha}{3}.    
\end{align*}
By choosing \(\alpha\) appropriately from the interval \([0, 1]\), Alice and Bob can achieve any valid game scenario where \(\gamma_1 = 0\). Similar strategies can be applied for cases where \(\gamma_2 = 0\) or \(\gamma_3 = 0\).\\
{\bf Case 2:} [All \(\gamma_i > 0\)] In this case, Alice and Bob must use strategies where \(r_i \neq 0\) and \(q_i \neq 0\) for all \(i\). To satisfy the condition that every diagonal entry of the visit matrix \(\mathbb{V}\) is zero, we partition the set \(\{1, 2, 3\}\) into two subsets \(X\) and \(Y\) where: for the restaurants in subset \(X\), Alice follows strategy \((i)\), and for the restaurants in subset \(Y\), Alice follows strategy \((ii)\). The possible partitions and their corresponding strategies are:
\begin{itemize}
\item[--] Partition \(R_1\): \(r_2 = r_3 = q_1 = 0\), \& \(r_1 = 1\), \& \(q_2 + q_3 = 1\); resulting probabilities: \(\gamma_1 = \frac{2}{3}\), \(\gamma_2 + \gamma_3 = \frac{1}{3}\).
\item[--] Partition \(R_2\): \(r_1 = r_3 = q_2 = 0\), \& \(r_2 = 1\), \& \(q_1 + q_3 = 1\); resulting probabilities: \(\gamma_2 = \frac{2}{3}\), \(\gamma_1 + \gamma_3 = \frac{1}{3}\).
\item[--] Partition \(R_3\): \(r_1 = r_2 = q_3 = 0\), \& \(r_3 = 1\), \& \(q_1 + q_2 = 1\); resulting probabilities: \(\gamma_3 = \frac{2}{3}\), \(\gamma_1 + \gamma_2 = \frac{1}{3}\). 
\end{itemize}
By inverting Alice's encoding (i.e., swapping \(0 \leftrightarrow 1\)), these scenarios cover all possible partitions. Thus, the game is perfectly winnable with a classical mixed strategy if and only if one of the \(\gamma_i\) values is \( \frac{2}{3} \) or \(0\). Therefore, the achievable game space \(\mathbb{H}^3(\gamma_1, \gamma_2, \gamma_3)\) with a classical mixed strategy is restricted to cases where one of the probabilities is \( \frac{2}{3} \) or \(0\). This condition highlights the constraints on the game space in classical strategies, as visualized in Figure \ref{fig:S}.
\begin{center}
\begin{table}[t!]
\begin{tabular}{ |c||c|c|c|c|c|c|  }
\hline
~~Set~~ & $R_1$ & $R_2$ & $R_3$ & $R_4$ & $R_5$ & $R_6$\\
\hline\hline
X & $\{1\}$ & $\{2\}$ & $\{3\}$ & $~\{2,3\}~$ & $~\{1,3\}~$ & $~\{1,2\}~$\\
\hline
Y & $~\{2,3\}~$ & $~\{1,3\}~$ & $~\{1,2\}~$ & $\{1\}$ & $\{2\}$ & $\{3\}$\\
\hline
\end{tabular}
\caption{Six different partitioning of three Restaurants into two nonempty disjoint sets.}
\label{table2}
\end{table}
\end{center}
\vspace{-.7cm}

\begin{figure}[t]
\centering
\includegraphics[width=0.5\textwidth]{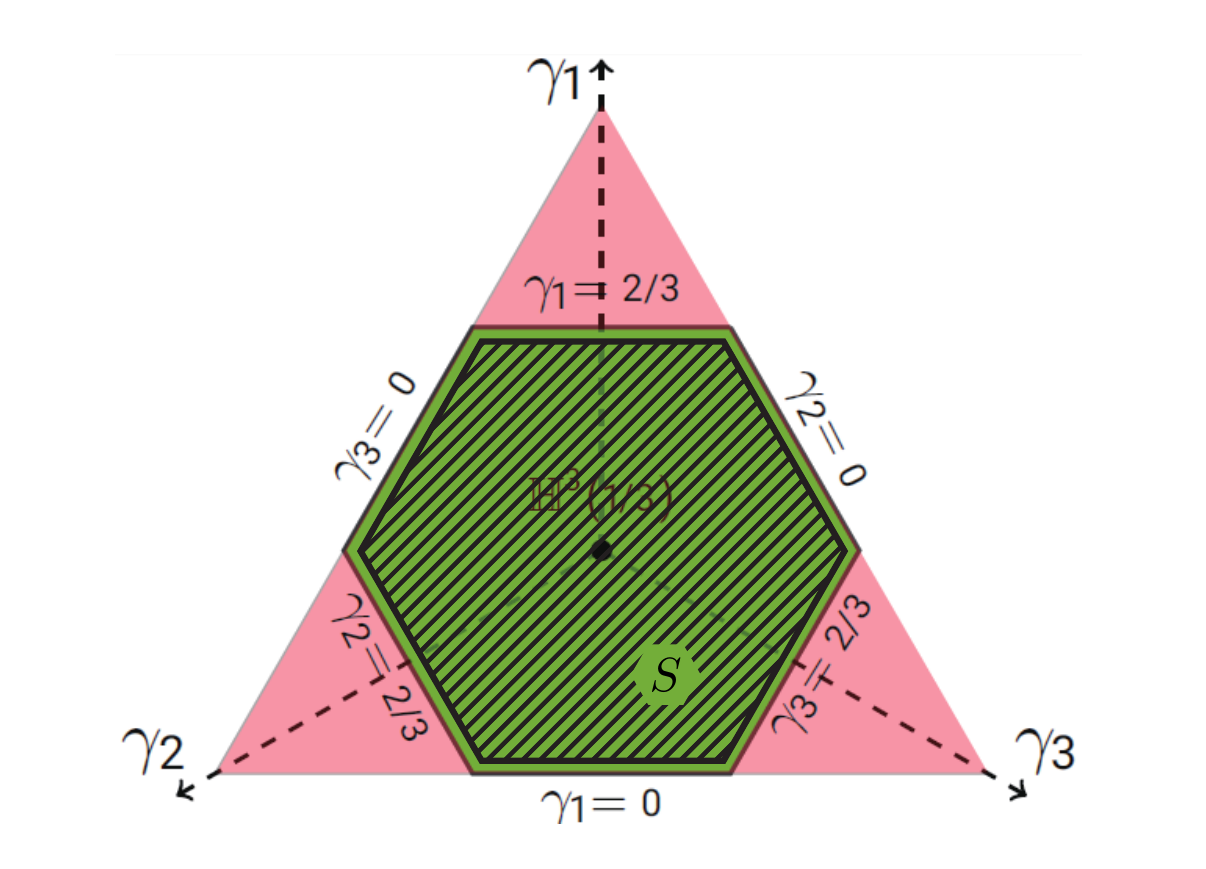}
\caption{(color online) The parameter space, depicted on the $\gamma_1+\gamma_2+\gamma_3=1 $ plane, represents the set of games $ \mathbb{H}^3(\gamma_1,\gamma_2,\gamma_3) $ . In this depiction, the regions shaded in orange denote unphysical games, as they violate condition  (h1). Conversely, the shaded green region forms a polytope encompassing all allowed games. The boundaries of this green polytope, specifically $  \gamma_i=0 $  and $ \frac{2}{3} $  for $  i\in\{1,2,3\} $, represent the exclusive set of games that can be won with only 1 classical bit communication. On contrary whole game space can be achievable by 1 qubit communication. The closed set $S$ that robust quantum advantage exists, inside the parameter-space ($\gamma_1+\gamma_2+\gamma_3=1$ plane) of all the games $\mathbb{H}^3(\gamma_1,\gamma_2,\gamma_3)$.}\label{fig:S}
\end{figure}

\subsubsection{Quantum achievable game space}
In this general case, to satisfy condition (h1), Alice must select pure states for encoding. She communicates (through a noiseless qubit channel) the state $\ket{\psi_i}$ to Bob when the $i^{th}$ restaurant is closed. In order to fulfill condition (h1), Bob needs to perform a decoding measurement represented by $\mathcal{M}=\left\{\alpha_i\ketbra{\psi_i^{\perp}}{\psi_i^{\perp}}~|~\alpha_i>0~\&~\sum_i\alpha_i=2\right\}_{i=1}^3$. He then visits the $i^{th}$ restaurant if the $i^{th}$ effect clicks. To meet this requirement, the completely mixed state must reside within the triangle formed by Bloch vectors corresponding to the encoding states $\{\ket{\psi_i}\}_{k=1}^3$. Without loss of any generality, Alice can choose her encodings as $\psi_1=(0,0,1)^{\mathrm{T}},~\psi_2=(-\sin\theta_2,0,\cos\theta_2)^{\mathrm{T}}$, and $\psi_3=(\sin\theta_3,0,\cos\theta_3)^{\mathrm{T}}$; where $\psi_i$ is the Bloch vector of the state $\ket{\psi_i}$, and $\theta_2,\theta_3\in[0,\pi]$ are the polar angles for the corresponding Bloch vectors (see Fig.\ref{fig6}). Note that $\theta_2+\theta_3$ cannot be less than $\pi$, as this configuration would not constitute a valid measurement.  Accordingly, the conditions set forth by (h1) can be expressed as follows:

\begin{subequations}
\begin{align}
\alpha_1+\alpha_2+\alpha_3=2,~~
\alpha_1+\alpha_2\cos\theta_2+\alpha_3\cos\theta_3=0,~~
-\alpha_2\sin\theta_2+\alpha_3\sin\theta_3=0.
\end{align}
\end{subequations}
These equations subsequently lead to:
\begin{subequations}
\begin{align}
\label{eq16}
\alpha_1=\frac{2\sin(\theta_2+\theta_3)}{\sin(\theta_2+\theta_3)-\sin\theta_2-\sin\theta_3},~~
\alpha_2=\frac{-2\sin\theta_3}{\sin(\theta_2+\theta_3)-\sin\theta_2-\sin\theta_3},~~
\alpha_3=\frac{-2\sin\theta_2}{\sin(\theta_2+\theta_3)-\sin\theta_2-\sin\theta_3},
\end{align}
\end{subequations}
and accordingly, we have,
\begin{align}
\label{eq17}
\gamma_1&= \frac{1}{3}(p(1|2)+p(1|3))= \frac{1}{3}\Tr[\alpha_1\ketbra{\psi_1^{\perp}}{\psi_1^{\perp}}\left(\ketbra{\psi_2}{\psi_2}+\ketbra{\psi_3}{\psi_3}\right)]= \frac{1}{3}\frac{\sin(\theta_2+\theta_3)(2-\cos\theta_2-\cos\theta_3)}{(\sin(\theta_2+\theta_3)-\sin\theta_2-\sin\theta_3)}.
\end{align}

\begin{figure} [h!]
\centering
\includegraphics[width=0.4\textwidth]{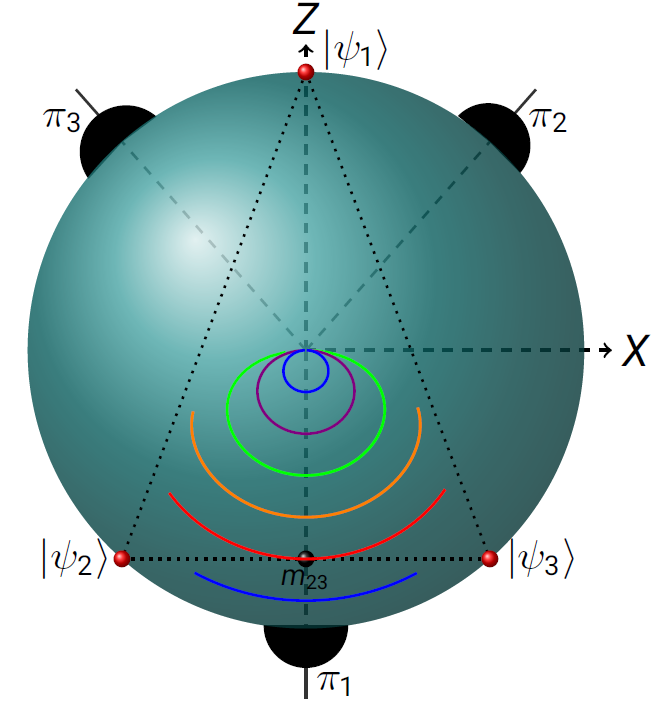}
\caption{(Color online) This figure shows locus of the midpoints $m_{23}$ (denoted by black dot) having constant $\gamma_1$. Once $\gamma_1$ is fixed, $m_{23}$ completely specifies the value of $\gamma_2$ and $\gamma_3$. Thus it is sufficient to plot constant $\gamma_1$ curves. Here, $\gamma_1=0.6,~0.5,~0.4,~0.3,~0.2, \&~0.1$ curves are plotted.The black dotted triangle, encapsulating the encoding states $\ket{\psi_i}$, illustrates a concrete strategy for the game $\mathbb{H}^3\left(0.5,0.25,0.25\right)$. As the black dot moves on the blue curve we get strategies for the games of the form $\mathbb{H}^3\left(0.5,\frac{p}{2},\frac{1-p}{2}\right)$, with $p\in[0,1]$. The leftmost point on the red curve corresponds to the game $\mathbb{H}^3\left(0.5,0,0.5\right)$ while the rightmost point corresponds to $\mathbb{H}^3\left(0.5,0.5,0\right)$.}\label{fig6}
\end{figure}

The encoding described above can be uniquely determined by setting the state $\ket{\psi_1}=\ket{0}$ and fixing the midpoint $m_{23}$ of the line segment connecting the Bloch vectors of $\ket{\psi_2}$ and $\ket{\psi_3}$ (see Fig.\ref{fig6}). This arises from the mathematical property that for any point within a great circle (except the center), there exists a unique chord with that point as its midpoint. By selecting the encoding state $\ket{\psi_1}=\ket{0}$ when the first restaurant is closed, the entire encoding is effectively characterized solely by the position of the midpoint $m_{23}$, with the Bloch vector corresponding to Restaurant $2$ positioned to the left.

Subsequently, Eq.(\ref{eq17}) can be utilized to plot the locus of midpoints $m_{23}$ with constant $\gamma_1$. Such a graphical representation is illustrated in Fig.\ref{fig6}. Consequently, we deduce that all games $\mathbb{H}^3(\gamma_1,\gamma_2,\gamma_3)$ can be won perfectly through some quantum strategy.

\subsection{Robust quantum advantage}

We show there is a robust quantum advantage on a closed set of $\mathbb{H}^3$ restaurant games, demonstrated by Fig.~\ref{fig:S}.

\begin{prop}
For an arbitrary game in the marked hexagon $S$ demonstrated in Fig.~\ref{fig:S}, there exist a noise level $\epsilon$, under which the quantum strategy implemented on a noisy quantum processor performs better than the best classical strategy. Mathematically,
\begin{align}
    \forall\ \mathbb{H}^3(\gamma_1,\gamma_2,\gamma_3)\in S,\ \exists\ \epsilon, \mathcal{E}_\text{noisy quantum}(\epsilon,\gamma_1,\gamma_2,\gamma_3)<\mathcal{E}_\text{classical}(\gamma_1,\gamma_2,\gamma_3),
\end{align}
in which $\mathcal{E}_\text{noisy quantum}(\epsilon,\gamma_1,\gamma_2,\gamma_3)$ is the quality index value of the quantum strategy implemented on a noisy quantum processor with noise level $\epsilon$, $\mathcal{E}_\text{classical}(\gamma_1,\gamma_2,\gamma_3)$ is the quality index value of an arbitrary classical strategy, playing the game $\mathbb{H}^3(\gamma_1,\gamma_2,\gamma_3)$.
\end{prop}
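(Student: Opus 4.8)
The plan is to exploit the structural gap established in the two preceding subsections: the ideal (noiseless) quantum strategy wins \emph{every} game $\mathbb{H}^3(\gamma_1,\gamma_2,\gamma_3)$ perfectly, whereas the best classical mixed strategy wins perfectly if and only if some $\gamma_i\in\{0,2/3\}$, i.e.\ only on the boundary of the green polytope. Since the marked hexagon $S$ in Fig.~\ref{fig:S} is a closed region chosen to lie strictly inside that polytope, away from the boundary lines $\gamma_i=0$ and $\gamma_i=2/3$, the optimal classical error is strictly positive throughout $S$, while the quantum error vanishes at zero noise. A continuity argument in $\epsilon$ then shows that small enough noise cannot close this gap.

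Concretely, I would first treat the quality index $\mathcal{E}$ (as defined in the main text) as a continuous function of the achieved visit matrix $\mathbb{V}$ relative to the target, penalising the violations of (h1) (nonzero diagonal) and (h2) (column sums deviating from $\gamma_i$). The classical input is that $\mathcal{E}_\text{classical}(\gamma_1,\gamma_2,\gamma_3)$, being the minimum of such a continuous error over the compact set of 1-bit strategies parametrised by the partitions of Table~\ref{table2} and the mixing weights $\{\alpha_i,r_i,q_i\}$, is itself continuous in $(\gamma_1,\gamma_2,\gamma_3)$ and vanishes exactly on the classical-winnable boundary. Because $S$ is compact and disjoint from that boundary, $\mathcal{E}_\text{classical}$ is strictly positive on $S$ and therefore attains a uniform positive minimum $\delta>0$: for every game in $S$, even the best classical strategy incurs error at least $\delta$. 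On the quantum side I would fix the perfect encoding/measurement $(\ket{\psi_i},\mathcal{M})$ of the quantum subsection, so $\mathcal{E}_\text{noisy quantum}(0,\gamma_1,\gamma_2,\gamma_3)=0$; modelling the hardware noise as a family of CPTP maps depending continuously on $\epsilon$ with the identity at $\epsilon=0$, the visit probabilities (trace formulas of the form Eq.~(\ref{eq17})) depend continuously on $\epsilon$, so $\mathcal{E}_\text{noisy quantum}(\epsilon,\gamma_1,\gamma_2,\gamma_3)\to 0$ as $\epsilon\to 0$. Hence there is an $\epsilon$ small enough that $\mathcal{E}_\text{noisy quantum}(\epsilon,\gamma_1,\gamma_2,\gamma_3)<\delta\le\mathcal{E}_\text{classical}(\gamma_1,\gamma_2,\gamma_3)$, which is exactly the claimed advantage, and since $\delta$ bounds the \emph{best} classical error the noisy quantum strategy in fact beats every classical strategy.

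The main obstacle is the classical lower bound. Pointwise positivity of $\mathcal{E}_\text{classical}$ in the interior follows from the Case~1/Case~2 analysis, but one must genuinely carry out the optimisation over the two-parameter Case-2 families together with the Case-1 $\alpha$-strategies to obtain $\mathcal{E}_\text{classical}(\gamma_1,\gamma_2,\gamma_3)$ as a well-defined function and verify its continuity, so that compactness of $S$ upgrades pointwise positivity to the uniform bound $\delta>0$. The quantum half is comparatively routine once a concrete noise channel is fixed, being a perturbation of the effects and states in Eq.~(\ref{eq17}). Finally, if a single noise threshold valid across all of $S$ is desired rather than one $\epsilon$ per game, joint continuity of $\mathcal{E}_\text{noisy quantum}$ in $(\epsilon,\gamma_1,\gamma_2,\gamma_3)$ together with the compactness of $S$ promotes the pointwise $\epsilon$ to a uniform one.
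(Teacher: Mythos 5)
Your proposal is correct and follows essentially the same route as the paper: the paper likewise reduces the claim to (i) the vanishing of the noisy quantum quality index as $\epsilon\to 0$, and (ii) continuity of the optimal classical quality index $\mathcal{E}_{\text{best classical}}(\gamma_1,\gamma_2,\gamma_3)$ (established via a general lemma that the minimum of a continuous function over the strategy variables is continuous), combined with closedness of $S$ and strict positivity of the classical error off the classically-winnable boundary. Your added emphasis on compactness of the strategy simplex is in fact a point the paper's general continuity lemma tacitly assumes, so your version is, if anything, slightly more careful.
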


Without loss of generality, we assume
\begin{align}
    \lim_{\epsilon\rightarrow 0}\mathcal{E}_\text{noisy quantum}(\epsilon,\gamma_1,\gamma_2,\gamma_3)&=\mathcal{E}_\text{ideal quantum}(\gamma_1,\gamma_2,\gamma_3)\\
    &=0.
\end{align}

A general classical strategy without pre-shared information must be a mixed strategy, such that 
\begin{align}\label{eqn:E_pa_pb}
    \mathcal{E}(\vec{p}^A,\vec{p}^B,\gamma_1,\gamma_2,\gamma_3)=\max_j \left\{k_1 \sum_i \sum_j p^A(j|i_c) p^B(i_b|j), ~k_2\left|\gamma_j-\frac{1}{n}\sum_i \sum_k p^A(k|i_c) p^B(j_b|k)\right| \right\}.
\end{align}
in which $p^A(j|i_c)$ is the probability that Alice send $j$ when $i^\text{th}$ restaurant is closed, $p^B(k_b|j)$ is the probability that Bob visits $k^\text{th}$ restaurant on receiving $j$. The lowest $\mathcal{E}_\text{classical}$ value is achieved by optimizing the the strategy-related probabilities $\vec{p}^A,\vec{p}^B$, as
\begin{align}\label{eqn:E_best}
    \mathcal{E}_\text{best classical}(\gamma_1,\gamma_2,\gamma_3)&=\min_{\vec{p}^A,\vec{p}^B} \mathcal{E}(\vec{p}^A,\vec{p}^B,\gamma_1,\gamma_2,\gamma_3),
\end{align}

Form the definition in Eqn.~(\ref{eqn:E_pa_pb}), we can easily find the function $\mathcal{E}(\vec{p}^A,\vec{p}^B,\gamma_1,\gamma_2,\gamma_3)$ is continuous and larger than zero. Therefore, the function $\mathcal{E}_\text{best classical}(\gamma_1,\gamma_2,\gamma_3)$ in Eqn.~(\ref{eqn:E_best}) is well-defined. Since $S$ is closed, it suffices to prove that $\mathcal{E}_\text{best classical}(\gamma_1,\gamma_2,\gamma_3)$ is continuous on $S$. 

We generally prove the following proposition.

\begin{prop}
Given a continuous multivariate function $f(\vec{x},\vec{y})$, supposing $g(\vec{y})=\min_{\vec{x}} f(\vec{x},\vec{y})$, then $g(\vec{y})$ is continuous.
\end{prop}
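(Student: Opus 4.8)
The plan is to prove continuity pointwise, exploiting the fact that in the application the inner variable $\vec{x}=(\vec{p}^A,\vec{p}^B)$ ranges over a fixed compact set $K$ (a product of probability simplices). I would therefore first make explicit the standing hypothesis that $f$ is continuous on $K\times D$ with $K$ compact. This compactness is exactly what guarantees, on the one hand, that the minimum defining $g$ is actually attained (by the extreme value theorem) and, on the other hand, that $f$ is uniformly continuous once $\vec{y}$ is confined to a compact neighbourhood. Fixing an arbitrary target point $\vec{y}_0\in D$ and an $\epsilon>0$, the goal is to produce a $\delta>0$ witnessing $|g(\vec{y})-g(\vec{y}_0)|<\epsilon$ whenever $|\vec{y}-\vec{y}_0|<\delta$.

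First I would restrict attention to $K\times\bar{B}(\vec{y}_0,\rho)$ for a small closed ball, which is compact, so that $f$ is uniformly continuous there. Uniform continuity then yields a $\delta\in(0,\rho)$ such that $|\vec{y}-\vec{y}_0|<\delta$ forces $|f(\vec{x},\vec{y})-f(\vec{x},\vec{y}_0)|<\epsilon$ \emph{simultaneously} for every $\vec{x}\in K$; this uniformity over $\vec{x}$ is the crucial feature of the argument. Next I would invoke the extreme value theorem to select a minimizer $\vec{x}^{\ast}$ with $g(\vec{y})=f(\vec{x}^{\ast},\vec{y})$ and a minimizer $\vec{x}_0^{\ast}$ with $g(\vec{y}_0)=f(\vec{x}_0^{\ast},\vec{y}_0)$.

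The key step is a two-sided sandwich. Because $\vec{x}^{\ast}$ is only a minimizer at $\vec{y}$, substituting the other minimizer gives $g(\vec{y})=f(\vec{x}^{\ast},\vec{y})\le f(\vec{x}_0^{\ast},\vec{y})<f(\vec{x}_0^{\ast},\vec{y}_0)+\epsilon=g(\vec{y}_0)+\epsilon$, and by the symmetric estimate $g(\vec{y}_0)<g(\vec{y})+\epsilon$. Combining the two bounds yields $|g(\vec{y})-g(\vec{y}_0)|<\epsilon$, which is precisely continuity at $\vec{y}_0$; since $\vec{y}_0$ was arbitrary, $g$ is continuous on $D$. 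This is the constant-correspondence special case of Berge's maximum theorem, but the uniform-continuity route is self-contained and sidesteps that machinery.

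I expect the only genuine subtlety to be the compactness bookkeeping: without compactness of the $\vec{x}$-domain the minimum may fail to be attained and $g$ need only be upper semicontinuous, so the main obstacle is ensuring the statement is read with the implicit (and, here, physically justified) assumption that the inner minimization runs over a compact set. Once that is secured, the uniformity of the continuity modulus over $\vec{x}$ does all the work, and the single inequality comparing the two minimizers is what promotes pointwise continuity of $f$ to continuity of the value function $g$.
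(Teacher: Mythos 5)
Your proof is correct, and it reaches the same key inequality as the paper --- comparing $f$ at the minimizer for $\vec{y}_0$ against the minimum at $\vec{y}$ --- but the logical packaging is genuinely different. The paper argues by contradiction: it assumes a jump of size $\delta$ between $g(\vec{y}_1)$ and $g(\vec{y}_2)$, picks minimizers $\vec{x}_1,\vec{x}_2$, and uses pointwise continuity of $f$ at $(\vec{x}_2,\vec{y}_1)$ to derive $f(\vec{x}_2,\vec{y}_1)<f(\vec{x}_1,\vec{y}_1)$, contradicting minimality. You instead give a direct two-sided sandwich via uniform continuity of $f$ on $K\times\bar{B}(\vec{y}_0,\rho)$. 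Your route buys two things the paper's write-up leaves implicit or shaky. First, you state the compactness of the $\vec{x}$-domain explicitly; the paper asserts that minimizers exist ``since $f$ is continuous,'' which is false without compactness (and indeed the proposition as literally stated is false in general --- e.g.\ $f(x,y)=(xy-1)^2$ on $\mathbb{R}\times\mathbb{R}$ gives $g(y)=0$ for $y\neq 0$ and $g(0)=1$). Second, the paper's pointwise-continuity step has a quantifier-ordering subtlety: the point $\vec{y}_2$ (and hence the minimizer $\vec{x}_2$) is chosen before the modulus $\gamma_2$, so nothing guarantees $\|\vec{y}_1-\vec{y}_2\|<\gamma_2$; your uniform-continuity argument, with a single $\delta$ valid for all $\vec{x}\in K$, is exactly what closes that loop. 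The only cost is invoking uniform continuity on a compact product, which is standard. In short: same essential mechanism, but your version is the rigorous one, and it correctly identifies compactness of the inner domain (satisfied here, since $(\vec{p}^A,\vec{p}^B)$ ranges over a product of simplices) as the hypothesis that must be added to the proposition for it to hold.
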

\begin{proof}
    
We prove by contradiction. Supposing $g(\vec{y})$ is not continuous on the point $g(\vec{y}_1)$, then
\begin{align}
    \exists \delta>0, \forall \gamma>0 \text{and} \|\vec{y}_1-\vec{y}_2\|&<\gamma,\\
    \left|\min_{\vec{x}} f(\vec{x},\vec{y_1})-\min_{\vec{x}} f(\vec{x},\vec{y_2})\right|&>\delta.
\end{align}

Without loss of generality, supposing $\min_{\vec{x}} f(\vec{x},\vec{y_1})>\min_{\vec{x}} f(\vec{x},\vec{y_2})$, then
\begin{align}
    \min_{\vec{x}} f(\vec{x},\vec{y_1})>\min_{\vec{x}} f(\vec{x},\vec{y_2})+\delta.
\end{align}

Since $f(\vec{x},\vec{y})$ is continuous, there exist $\vec{x}_1,\vec{x}_2$, such that
\begin{align}
    \min_{\vec{x}} f(\vec{x},\vec{y_1})=f(\vec{x_1},\vec{y_1}),\\
    \min_{\vec{x}} f(\vec{x},\vec{y_2})=f(\vec{x_2},\vec{y_2}).
\end{align}

Besides,
\begin{align}
    \exists \gamma_2, \forall \|(\vec{x_2},\vec{y_1})-(\vec{x_2},\vec{y_2})\|&<\gamma_2,\\
    |f(\vec{x_2},\vec{y_1})-f(\vec{x_2},\vec{y_2})|&<\frac{\delta}{2}.
\end{align}

Then,
\begin{align}
    f(\vec{x_2},\vec{y_1})&<f(\vec{x_2},\vec{y_2})+\frac{\delta}{2},\\
    &<f(\vec{x_1},\vec{y_1}),
\end{align}
which is contradictory to the fact that $\min_{\vec{x}} f(\vec{x},\vec{y_1})=f(\vec{x_1},\vec{y_1})$.

\end{proof}

\section{Experimental implementation details}
\subsection{The single-qubit photon source}
Laser pulses with a central wavelength of 390 nm, pulse duration of 150 fs, and repetition rate of 80 MHz pass through a half-wave plate (HWP) sandwiched by two $\beta$-barium borate (BBO) crystals, where pairs of entangled photons are generated by spontaneous parametric down-conversion, flying towards two directions. A polarized beam splitter (PBS) is placed on one of the paths to disentangle the photons and post-select horizontal polorization, generating single photons of state $\ket{H}$.




\subsection{Arbitrary 3-element POVM}

We introduce the functionality of the designed variational triangular polarimeter and show it can be extended to implement an arbitrary 3-element single-qubit POVM. The photons that Bob receives first pass through a partially-polarizing beam splitter (PPBS), which reflects a proportion of $1-x$ of the vertically polarized photons and transmits the remaining vertically polarized photons and all horizontally polarized photons. 
The PPBS introduces a new path qubit for the photons. Suppose the received state is $\ket{\psi_\text{in}}=\alpha\ket{0}+\beta\ket{1}$, with the following six wave plates on the two ends of the PPBS, the output state is generally 
\begin{equation}
\ket{\psi_\text{out}}=U_2\left(\alpha\ket{H}+\sqrt{x}\beta\ket{V}\right)\ket{T}+\sqrt{1-x}\beta U_3\ket{V}\ket{R},
\end{equation}
in which $\ket{T}$ represents the state in the transimission path, $\ket{R}$ represents the state in the reflection path, $U_2,U_3$ are two arbitrary single-qubit gates implemented by the wave plates. After the wave plates, Bob uses polarizing beam splitters (PBS) and single-photon detectors to measure the photons on the basis of $\{\ket{H},\ket{V}\}$ and $\{\ket{T},\ket{R}\}$. Denoting $\ket{H},\ket{T}$ as $\ket{0}$, and $\ket{V},\ket{R}$ as $\ket{1}$, the polarimeter defines four measurement operators $\pi_0,\pi_1,\pi_2,\pi_3$ by 
\begin{align}
\tr{[\pi_i\ket{\psi_\text{in}}\bra{\psi_\text{in}}]}=|\braket{i|\psi_\text{out}}|^2,\quad i=0,1,2,3.
\end{align}

Supposing
\begin{align}
U_2&=\begin{pmatrix}
u_{2,00}&u_{2,01}\\
u_{2,10}&u_{2,11}
\end{pmatrix},~~\&~~U_3=\begin{pmatrix}
u_{3,00}&u_{3,01}\\
u_{3,10}&u_{3,11}
\end{pmatrix},
\end{align}
we have
\begin{align}
\pi_0&=\begin{pmatrix}
|u_{2,00}|^2&\sqrt{x}\overline{u_{2,00}}u_{2,01}\\
\sqrt{x}u_{2,00}\overline{u_{2,01}}&x|u_{2,01}|^2
\end{pmatrix},\label{sm:eqn:pi0}\\
\pi_1&=\begin{pmatrix}
|u_{2,10}|^2&\sqrt{x}\overline{u_{2,10}}u_{2,11}\\
\sqrt{x}u_{2,10}\overline{u_{2,11}}&x|u_{2,11}|^2
\end{pmatrix},\label{sm:eqn:pi1}\\
\pi_2&=\begin{pmatrix}
0&0\\
0&(1-x)|u_{3,01}|^2
\end{pmatrix},\\
\pi_3&=\begin{pmatrix}
0&0\\
0&(1-x)|u_{3,11}|^2
\end{pmatrix}.
\end{align}

To conduct 3-element measurement, we set $\pi_3=0$, and replace the corresponding coupler by a light dump, which constitutes part of our current experiment device. To conduct arbitrary rank-1 3-element measurement, we insert an arbitrary gate $U_4$ in front of PPBS, and the measurement operators are transformed to $U_4^\dagger\pi_0 U_4,U_4^\dagger\pi_1 U_4,U_4^\dagger\pi_2 U_4$. To prove these operators achieve an arbitrary rank-1 3-element measurement, we first write the general form
\begin{align}
\Pi_0=\alpha_0\ket{\psi_0}\bra{\psi_0},~~\Pi_1=\alpha_1\ket{\psi_1}\bra{\psi_1},~~\Pi_2=\alpha_2\ket{\psi_2}\bra{\psi_2};~~
\mbox{with},~\alpha_i\geq 0,~\&~\sum \Pi_i=\mathbb{I}.
\end{align}

The third operator $\Pi_2$ is defined such that it can be easily related to $U_4^\dagger\pi_2 U_4$ by
\begin{align}
\alpha_2&=2-x,\\
\ket{\psi_2}&=\frac{1}{\sqrt{\alpha_2}}(\bar{u}_{4,00}+\bar{u}_{4,10}\sqrt{1-x})\ket{0}+(\bar{u}_{4,01}+\bar{u}_{4,11}\sqrt{1-x})\ket{1}.
\end{align}

Then the first two operators $\Pi_0$ and $\Pi_1$ must satisfy
\begin{align}
    \Pi_0+\Pi_1=U_4^\dagger\begin{pmatrix}
1&0\\
0&x
\end{pmatrix}U_4,
\end{align}
which is
\begin{align}
    U_4\Pi_0U_4^\dagger+U_4\Pi_1U_4^\dagger=\begin{pmatrix}
1&0\\
0&x
\end{pmatrix}.
\end{align}

Since $U_4\Pi_0U_4^\dagger$ is Hermitian and rank-1, it can be decomposed as
\begin{align}
    U_4\Pi_0U_4^\dagger=\begin{pmatrix}
        a\\b
    \end{pmatrix}\begin{pmatrix}
        \bar{a}&\bar{b}
    \end{pmatrix}.
\end{align}

Then
\begin{align}
    U_4\Pi_1U_4^\dagger=\begin{pmatrix}
1-|a|^2 & -a\bar{b}\\
-\bar{a}b & x-|b|^2
    \end{pmatrix}.
\end{align}

We find it is exactly the form of Eqn.~(\ref{sm:eqn:pi0},\ref{sm:eqn:pi1}), which leads to
\begin{align}
    \Pi_0&=U_4^\dagger\pi_0 U_4,\\
    \Pi_1&=U_4^\dagger\pi_1 U_4.
\end{align}

Once we have the capability to perform any rank-1 POVM, we can extend this to perform a general POVM by assigning mixing probabilities among a set of rank-1 POVMs.

\subsection{Choosing restaurant games}

Theoretically, there is a large set of $\mathbb{H}^3$ games that can be won quantumly but not classically, hence implying the quantum advantage. 
We set a PPBS with $\ket{V}$ reflection ratio as $1/3$ and play ten of such games, distributed on the curve
\begin{align}
\Gamma=\left\{\left(\frac{-a^2+a+4}{12-6 a},\frac{2
\left(a^2-2\right)}{3
\left(a^2-4\right)},-\frac{a^2+a-4}{
6 (a+2)}\right)\Bigg|a\in [-1,1]\right\},
\end{align}
in the parameter space of all the $\mathbb{H}^3$ games, as shown in Fig.~4 in the main text. The parameters of the chosen games include the theoretical visiting probabilities for each restaurant, the encoding states $\rho_k$ of Alice, the coefficients $\lambda_j$ of Bob's measurement operators (MO), and the rotation angles of the wave plates in $U_1,U_2,U_3$.

\begin{table*}[t]
\centering
\caption{\textbf{The parameters of chosen games.} For each played game, we provide the detailed information about the theoretical visiting probabilities for each restaurant, the encoding states $\rho_k$ of Alice, the coefficients $\lambda_j$ of measurement operators (MO) of Bob, and the rotation angles of the wave plates in $U_1,U_2,U_3$. The encoding states are represented by the vectors $\vec{n}_k$ such that $\rho_k=\ket{\psi_j}\bra{\psi_j}=(\mathbb{I}+\vec{n}_k\cdot \sigma)/2, \ k=1,2,3$. The measurement operators are $\lambda_j\ket{\psi_j^\bot}\bra{\psi_j^\bot}, \ j=1,2,3$. Over the ten games, the angles of wave plates in $U_3$ is 130.5, 355.5, 40.5.}\label{tab:data3}
\resizebox{\textwidth}{22mm}{
\begin{tabular}{|c|c|c|c|c|c|c|c|c|}
\hline
Exp. &  $\rho_1$            & $U_1$ Angles ($\rho_1$) & $\rho_2$ & $U_1$ Angles ($\rho_2$) & $\rho_3$             & $U_1$ Angles ($\rho_3$) & MO Coef.          & $U_2$ Angles      \\ \hline
1    &  0.041, 0.160, -0.986  & 298.6, 310.6, 61.4 & 0, 0, 1    & 75.0, 345.0, 75.0 & -0.117, -0.454, 0.883  & 359.1, 6.9, 0.9    & 0.986, 0.667, 0.347 & 35.6, 12.2, 48.8    \\\hline
2    & 0.248, 0.217, -0.944  & 50.3, 347.7, 309.7 & 0, 0, 1    & 75.0, 345.0, 75.0 & -0.607, -0.533, 0.589  & 102.4, 5.5, 77.6   & 0.947, 0.667, 0.386 & 341.6, 3.7, 304.1   \\\hline
3    &  0.187, -0.456, -0.870 & 126.6, 75.9, 53.4  & 0, 0, 1    & 75.0, 345.0, 75.0 & -0.370, 0.900, 0.230   & 342.7, 351.2, 17.3 & 0.885, 0.667, 0.448 & 29.8, 64.8, 33.3    \\\hline
4    & 0.100, 0.645, -0.758  & 326.0, 349.1, 34.0 & 0, 0, 1    & 75.0, 345.0, 75.0 & -0.153, -0.982, -0.107 & 22.8, 8.6, 337.2   & 0.805, 0.667, 0.528 & 304.4, 349.2, 347.4 \\\hline
5    &  0.101, -0.794, -0.600 & 121.6, 3.0, 58.4   & 0, 0, 1    & 75.0, 345.0, 75.0 & -0.116, 0.916, -0.385  & 61.9, 357.2, 118.1 & 0.714, 0.667, 0.619 & 41.1, 6.9, 2.1      \\\hline
6    &  0.906, -0.177, -0.385 & 24.2, 16.7, 335.8  & 0, 0, 1    & 75.0, 345.0, 75.0 & -0.785, 0.154, -0.600  & 331.8, 341.9, 28.2 & 0.619, 0.667, 0.714 & 357.1, 325.0, 43.4  \\\hline
7    &  0.295, -0.950, -0.107 & 343.7, 341.2, 16.3 & 0, 0, 1    & 75.0, 345.0, 75.0 & -0.194, 0.623, -0.758  & 28.6, 25.1, 331.4  & 0.528, 0.667, 0.805 & 1.1, 43.9, 332.6    \\\hline
8    &  -0.001, 0.973, 0.230  & 351.1, 342.7, 8.9  & 0, 0, 1    & 75.0, 345.0, 75.0 & 0.001, -0.493, -0.870  & 28.5, 31.0, 331.5  & 0.448, 0.667, 0.885 & 255.8, 304.3, 326.9 \\\hline
9    &  0.731, 0.344, 0.589   & 77.3, 4.7, 102.7   & 0, 0, 1    & 75.0, 345.0, 75.0 & -0.298, -0.140, -0.944 & 129.9, 79.7, 50.1  & 0.386, 0.667, 0.947 & 127.9, 89.6, 25.1   \\\hline
10    & -0.465, 0.0610, 0.883 & 356.4, 6.0, 3.6    & 0, 0, 1    & 75.0, 345.0, 75.0 & 0.163, -0.0214, -0.986 & 40.4, 330.6, 319.6 & 0.347, 0.667, 0.986 & 52.0, 337.9, 310.3  \\ \hline
\end{tabular}}
\end{table*}

\subsection{Minimization of classical value of the quality index $\mathcal{E}$}

In this section we describe a general method to compute the minimum values of $\mathcal{E}_C$ attainable by classical strategies. Consider a $n$-restaurant game where Alice is allowed to communicate $\log_2d$-bits to Bob. The most general classical strategy that can be employed by Alice and Bob is described as follows.
\begin{enumerate}
    \item When the $i^{\text{th}}$ restaurant is closed, Alice sends Bob the message $j \in \{1,2,\cdots, d\} $ with probability $p^A(j|i_c)$.
    \item Upon receiving message $j$, Bob visits the $k^{\text{th}}$ restaurant with probability $p^B(k_b|j)$.
    
\end{enumerate}
The probability $p(k_b|i_c)$ of Bob visiting the $k^{\text{th}}$ restaurant when the $i^{\text{th}}$ restaurant is closed is then given by $p(k_b|i_c) = \sum_j p^A(j|i_c) p^B(k_b|j)$.\\

We recall that $\mathcal{E}$ is defined as,
\begin{equation}
    \mathcal{E}:= \max_j \left\{k_1 \sum_i p(i_b|i_c), ~k_2|\gamma_j-p_j| \right\}, \label{def:eps}
\end{equation}
where $k_1$ and $k_2$ are given constants that depend on the relative importance of conditions (h1) and (h2) respectively, and $p_j = \sum_i p(j_b|i_c) p(i_c) = \frac{1}{n}\sum_i p(j_b|i_c)$. For a given probability distribution $p^A(j|i_c)$, the minimum value of $\mathcal{E}$ can be obtained by solving the following optimization problem:
\begin{align}
    \min_{ \mathcal{E},\{p^B(k_b|j)\}} {\mathcal{E}} ~~~~~~~~ &~~ \text{subject to} \nonumber\\
    & p^B(k_b|j) \ge 0 ~\forall k,j ,\nonumber\\
    &\sum_k p^B(k_b|j) = 1  ~\forall j,\nonumber\\
    & k_1 \sum_i p(i_b|i_c) \le \mathcal{E} ,\nonumber\\
    & k_2 (\gamma_j-p_j) \le \mathcal{E} ~\forall j ,\nonumber\\
    & k_2 (\gamma_j-p_j) \ge -\mathcal{E} ~\forall j  ,\label{eq:lp}
\end{align}
where, $p(k_b|i_c) = \sum_j p^A(j|i_c) p^B(k_b|j)$ and $p_j =\frac{1}{n}\sum_i p(j_b|i_c)$.
The first two conditions require that $p^B(k_b|j)$ is a valid probability distribution, and the last three conditions specify that $\mathcal{E}$ is defined by \eqref{def:eps}.
The optimal classical value $\mathcal{E}_C$ is obtained by solving the above optimization problem for all the allowed probability distributions $ p^A(j|i_c) $ and taking the minimum value. The optimization problem presented in \eqref{eq:lp} is a linear program, for which there are efficient numerical algorithms that are guaranteed to converge to the optimal value within a precision of $10^{-3}$. For the $3$-restaurant game, We performed this optimization by discretizing the allowed values of $p^A(j|i_c)$. We selected the parameters $k_1 = \frac{1}{3}$ and $k_2 = 1$.

\subsection{The classical and experimental quantum winning index values}

We list the winning index values of both classical and implemented quantum strategies in Tab.~\ref{tab:eps}.

\begin{table*}[h!]
\centering
\caption{The classical and experimental quantum winning index values, as well as the ideal and experimental visiting probabilities for each restaurant.}\label{tab:eps}
\resizebox{0.85\textwidth}{24mm}{
\begin{tabular}{|c|c|c|c|c|c|}
\hline
Exp. & quantum (ideal)     & quantum (noisy)         & classical  & $\gamma_i$ (ideal)  &$p_i$ (noisy)  \\ \hline
1    & 0 & 0.00348  & 0.00658 & 0.64694, 0.23368, 0.11938   & 0.6483 , 0.2358 , 0.1159       \\\hline
2    & 0 & 0.01592  & 0.01770  & 0.59454, 0.26168, 0.14378   & 0.57862, 0.26316, 0.15822      \\\hline
3    & 0 & 0.0135  & 0.03517  & 0.52361, 0.29331, 0.18308   & 0.5253 , 0.27981, 0.19489    \\\hline
4    & 0 & 0.00455  & 0.05375   & 0.44626, 0.31831, 0.23543  & 0.44532, 0.3147 , 0.23998    \\\hline
5    & 0 & 0.00732  & 0.07264  & 0.36982, 0.33164, 0.29854   & 0.37686, 0.32432, 0.29882   \\\hline
6    & 0 & 0.00947  & 0.07264    & 0.29854, 0.33164, 0.36982  & 0.29803, 0.32268, 0.37929   \\\hline
7    & 0 & 0.00508  &  0.05375   & 0.23543, 0.31831, 0.44626   & 0.23035, 0.31992, 0.44973  \\\hline
8    & 0 & 0.00978  &0.03517  & 0.18308, 0.29331, 0.52361   & 0.18581, 0.28353, 0.53066  \\\hline
9    & 0 & 0.00755  & 0.01770  & 0.14378, 0.26168, 0.59454   & 0.13986, 0.25804, 0.60209 \\\hline
10    & 0 & 0.00046 & 0.00658   & 0.11938, 0.23368, 0.64694  & 0.11985, 0.23343, 0.64672   \\\hline
\end{tabular}}
\end{table*}

\section{Device Certification}

Certifying the quantumness of devices is essential for advancing quantum technologies. This certification process is a prerequisite for many quantum applications. Utilizing the non-classical phenomena observed in restaurant games, we can effectively validate the quantumness of both state preparation devices (PD) and measurement devices (MD). In this work, we present a comprehensive procedure for certifying the quantumness of these devices through the restaurant game framework. The workflow is depicted in Fig.~\ref{fig4}.
\begin{itemize}
\item[] {\bf Step 1:} The state preparation device (PD) at Alice's end generates an ensemble \(\mathcal{S} := \{\rho_i \mid i = 1, \ldots, n\}\) of qubit states, where each state is encoded as \(\rho_i = \ket{\psi_i}\bra{\psi_i}\). Here, \(\ket{\psi_i}\) represents the encoding states used in an \(n\)-restaurant game known to exhibit quantum advantage.
\item[] {\bf Step 2:} The ensemble is then transferred to Bob, who implements a measurement device (MD) characterized by a set of positive-operator-valued measures (POVM) \(\{M_k\}_k\). Each POVM element \(M_k\) is given by \(M_k = \alpha_k \ket{\psi_i^\bot}\bra{\psi_i^\bot}\), where \(\alpha_k\) is determined by the parameters of the restaurant game.
\item[] {\bf Step 3:} The verifier collects the data from the state preparation and measurement processes and computes the winning index function \(\mathcal{E}_\text{V}\) for the restaurant game. Simultaneously, the verifier calculates the minimum winning index function \(\mathcal{E}_\text{C}\) for classical strategies by solving the optimization problem defined in Eq.~(\ref{eqn:E_best}). If \(\mathcal{E}_\text{V} < \mathcal{E}_\text{C}\), the verifier confirms the quantumness of the PD and MD pair; otherwise, the certification process fails.
\end{itemize}
The following proposition demonstrates the effectiveness of our device certification protocol.
\begin{figure}[t!]
\begin{center}
\includegraphics[width=.5\linewidth]{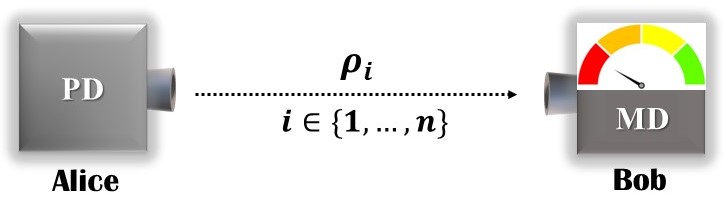}
\end{center}
\caption{\textbf{Nonclassicality of (PD,MD).} A PD at Alice's end produce an ensemble $\mathcal{S}:=\{\rho_i~|~i=1,\cdots,n\}$ of qubit states. The system is transferred to Bob who implements an MD. Quantum advantage in $\mathbb{H}^3(\gamma_1,\gamma_2,\gamma_3)$ game certify successful implementation of such (PD, MD) pairs.}\label{fig4}
\end{figure}

\begin{prop}
Quantum advantage in \(\mathbb{H}^3(\gamma_1, \gamma_2, \gamma_3)\) games guarantees both the presence of coherence in the encoded qubit states and the implementation of a non-projective measurement by the receiver.
\end{prop}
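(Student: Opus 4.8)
\emph{Proof proposal.} The plan is to establish both claimed necessary conditions by contraposition, reducing each degenerate case to a purely classical one-bit strategy. The key reduction I want is the following: whenever Alice's three encodings pairwise commute (carry no coherence) \emph{or} Bob's decoding POVM is projective, the induced visit probabilities factorize through a single classical bit as
\begin{align}
p(k_b|i_c)=\sum_{b\in\{0,1\}} a_b(i_c)\,c_k(b),\qquad a_b(i_c)\ge 0,\ \textstyle\sum_b a_b(i_c)=1,\ c_k(b)\ge 0,\ \sum_k c_k(b)=1.
\end{align}
This is exactly the classical mixed-strategy form of Eq.~(\ref{visit3}) under the identification $a_0(i_c)=\alpha_i$, $c_k(0)=r_k$, $c_k(1)=q_k$. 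Since the induced visit matrix $\mathbb{V}$ coincides with an admissible classical one, the resulting $\vec\gamma$ lies in the classical achievable set and, because $\mathcal{E}$ depends only on $\mathbb{V}$ (and $\vec\gamma$), the attainable quality index can be no smaller than $\mathcal{E}_\text{best classical}$. Hence no strict quantum advantage $\mathcal{E}_\text{quantum}<\mathcal{E}_\text{classical}$ can survive in either degenerate case.

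For the coherence claim I would argue as follows. If the three pure encodings $\ket{\psi_i}$ mutually commute, they share a common eigenbasis $\{\ket{e_0},\ket{e_1}\}$; but the only rank-one states diagonal in a fixed qubit basis are $\ket{e_0}$ and $\ket{e_1}$ themselves, so at most two distinct encodings occur. Alice therefore transmits nothing but the classical label $b\in\{0,1\}$ of the state she sends, and every outcome probability $\langle\psi_{i_c}|M_k|\psi_{i_c}\rangle$ depends on $i_c$ only through this label. Setting $a_b(i_c)$ to be the probability that input $i_c$ is encoded into label $b$ and $c_k(b)=\langle e_b|M_k|e_b\rangle$ yields precisely the one-bit factorization above. (Geometrically, collinear Bloch vectors cannot bound a nondegenerate triangle containing the maximally mixed state, which already signals that this case cannot reach the interior games.)

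For the non-projective claim I would use that a projective measurement on $\mathbb{C}^2$ consists of mutually orthogonal projectors, of which at most two can be nonzero rank-one operators. Bob's measurement then delivers only a single classical bit $b$ with $p(b|i_c)=\langle\psi_{i_c}|P_b|\psi_{i_c}\rangle$, after which any (possibly randomized) restaurant assignment $c_k(b)=p^B(k_b|b)$ gives $p(k_b|i_c)=\sum_b p(b|i_c)\,c_k(b)$ — again the one-bit form. Thus a projective decoder, even acting on coherent states, confines the strategy to the classical game space. Combining the two reductions: a genuine advantage forbids both commuting encodings and projective decoding, i.e. it \emph{guarantees} coherence in the $\ket{\psi_i}$ and a genuinely three-outcome, non-projective POVM.

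The main obstacle I anticipate is the precise formalization of ``absence of coherence'': I must argue that it is equivalent to joint diagonalizability of the pure encodings, and hence to the existence of at most two distinct states, so that the classical label is well defined. The remaining work is routine — verifying that the factorized $p(k_b|i_c)$ satisfies the stochasticity and (h1) constraints so that it is an \emph{admissible} classical visit matrix rather than merely one resembling Eq.~(\ref{visit3}), and noting that equal visit matrices yield equal $\mathcal{E}$, which closes the contrapositive against the strict advantage inequality.
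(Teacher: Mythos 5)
Your proposal is correct and follows essentially the same route as the paper: both degenerate cases (jointly diagonalizable encodings, projective decoding) are handled by contraposition via the same one-bit factorization $p(k_b|i_c)=\sum_b a_b(i_c)\,c_k(b)$, with the identifications $a_b(i_c)=\bra{b}\rho_{i}\ket{b}$ or $p^A(b|i_c)$ and $c_k(b)=\bra{b}M_k\ket{b}$ or $p^B(k_b|b)$ matching the paper's two simulation arguments exactly. The only cosmetic difference is that the paper phrases ``absence of coherence'' directly as the encodings being diagonal in a fixed basis (allowing mixed diagonal states), whereas you derive this from mutual commutativity of pure encodings; the substance is identical.
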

\begin{proof}
We first establish that a quantum strategy, where Alice uses orthogonal encoding states, can be simulated by a classical strategy. In this quantum scenario, Alice transmits the encoded state \(\rho_i = \sum_j p^A(j \mid i_c) \ket{j}\bra{j}\) to Bob when the \(i\)-th restaurant is closed, where \(\{\ket{j}\}_{j=1}^2\) forms an orthonormal basis in \(\mathbb{C}^2\). Bob then performs a measurement with POVM elements \(\{M_k\}_k\) for decoding and visits the \(k\)-th restaurant based on the measurement outcome. The probability that Bob visits the \(k\)-th restaurant when the \(i\)-th restaurant is closed is given by: $p(k_b \mid i_c) = \sum_j p^A(j \mid i_c) \bra{j} M_k \ket{j}$. This quantum strategy can be effectively simulated by a classical strategy. In this classical simulation, Alice sends the message \(j \in \{1,2\}\) with probability \(p^A(j \mid i_c)\) when the \(i\)-th restaurant is closed. Upon receiving the message \(j\), Bob visits the \(k\)-th restaurant with probability \(\bra{j} M_k \ket{j}\).

Next, we demonstrate that a projective measurement performed during the decoding step can be simulated by a classical strategy. Consider the scenario where Alice sends the state \(\rho_i\) to Bob when the \(i\)-th restaurant is closed. Bob then performs a projective measurement \(\{\ket{j}\bra{j}\}_{j=1}^2\) and visits the \(k\)-th restaurant with probability \(p^B(k_b \mid j)\) upon obtaining outcome \(j\). The probability that Bob visits the \(k\)-th restaurant when the \(i\)-th restaurant is closed is given by: $p(k_b \mid i_c) = \sum_j p^B(k_b \mid j) \bra{j} \rho_i \ket{j}$. This quantum strategy can be simulated by a classical strategy as follows: Alice sends the message \(j \in \{1,2\}\) with probability \(\bra{j} \rho_i \ket{j}\) when the \(i\)-th restaurant is closed. Upon receiving the message \(j\), Bob then visits the \(k\)-th restaurant with probability \(p^B(k_b \mid j)\). This completes the proof.
\end{proof}


\bla



\end{document}